\title{Projective Clustering Product Quantization}
\author{Aditya Krishnan \\
	Pinecone, Johns Hopkins University\\
  \texttt{akrish23@jhu.edu}
	\and 
	Edo Liberty \\
	Pinecone\\
  \texttt{edo@pinecone.io}
	}
\date{\nonumber}
\begin{document}
\pagenumbering{arabic}
\maketitle

\begin{abstract}
This paper suggests the use of projective clustering based product quantization for improving nearest neighbor and max-inner-product vector search (MIPS) algorithms. 
We provide anisotropic and quantized variants of projective clustering which outperform previous clustering methods used for this problem such as ScaNN. 
We show that even with comparable running time complexity, in terms of lookup-multiply-adds, projective clustering produces more quantization centers resulting in more accurate dot-product estimates. 
We provide thorough experimentation to support our claims.
\end{abstract}

\section{Introduction}\label{sec:intro}
In vector similarity search one preprocesses $n$ vectors $x \in \R^d$ such that, given a query $q\in R^d$ computing the top (or bottom) $k$ values of $f(x,q)$ is most efficient. 
The function $f$ can encode similarity between $x$ and $q$ like dot product $\langle x, q \rangle$, cosine similarity $\langle x, q \rangle/\|x\|\|q\|$ or Jaccard Similarity. In these cases high values of $f$ are sought. 
Alternatively, $f$ can encode dissimilarity or distance in which case, low values are searched for. Examples include $f(x,q) = \|x-q\|_p = (\sum_i (x_i -q_i)^p)^{1/p}$, or hamming distance. The reader should be aware that all the above problems are highly related and solutions for one often translate to solutions for another. 
For example, max cosine similarity, max dot product and min Euclidean distance are all equivalent for unit vectors. 

In recent years, vector similarity search has taken center stage in the deployment of large scale machine learning applications. 
These include duplicate detection \cite{wang2006large,wang2010towards,zhou2017recent}, personalization \cite{10.1007/978-3-319-56608-5_54,grbovic2018real}, extreme classification \cite{dean2013fast}, and more. 
The demand for vector search was further fueled by proliferation of pre-trained deep learning models which provide semantically rich vector embeddings for text, images, audio, and more \cite{cremonesi2010performance,weston2010large,wu2018starspace}. 
The scale of these problems often requires searching through billions of vectors or performing thousands of searches a second. 
Performance is therefore paramount which necessitates the use of approximate algorithms.
These algorithms trade some tolerance for inaccuracy in the result set with speed.
Practical considerations also need to consider indexing speed, index size, concurrent read-write support, support for inserts and deletes, and many other factors. 
The exact tradeoffs between these factors are both complex and data dependent. 
Due to the importance and complexity of this task, significant academic and industrial efforts were invested in approximate vector search algorithms \cite{johnson2019billion,chen2018sptag,malkov2018efficient}.

Algorithms fall generally into a handful of categories.
Hashing based algorithms \cite{NIPS2014_310ce61c,shrivastava2015assymetric,NIPS2015_2823f479}, tree-search based algorithms \cite{muja2014scalable,dasgupta2008random} and graph based algorithms \cite{malkov2018efficient,harwood2016fanng} mostly focus on reducing the dependence on $n$, the number of vectors. 
Dimension reduction \cite{charikar2002similarity,vempala2005random,li2019random} and quantization algorithms \cite{wu2017multiscale,guo2016quantization,martinez2018lsq++,guo2020accelerating,ge2013optimized,gong2012iterative,martinez2016revisiting,zhang2014composite,ozan2016competitive}, on the other hand, reduce the dependence on $d$ by compressing the index vectors and computing similarities to these compressed representations quickly.
To achieve the best results for a specific application, usually a combination of the above methods is needed.  
Versatile libraries like Faiss \cite{johnson2019billion} along with managed cloud services like Pinecone \cite{pinecone} make these a lot easier for scientists and engineers to use in production applications.

In this paper we focus solely on quantization techniques for approximate max inner product search (MIPS). 
Simple quantization can be achieved with clustering, specifically, a small representative set of $k$ vectors is chosen and each vector in the data is projected onto that set.
Computing the dot products of the query and centers requires only $O(dk)$ time. 
Then, computing the approximate dot product for each point can be done in $O(1)$ time by a simple lookup. 
The shortcoming of this approach is that the quantization is very coarse;
it allows for only $k$ possible quantization centers.
More advanced product quantization techniques divide the $d$ coordinates to $m$ contiguous sections and project each set of $d/m$ coordinates onto one of $k$ centers independently. 
The advantage of this is that computing the dot product of the query and the centers still requires $O(dk)$ operations but now each point can map to one of $k^m$ centers.
The price of this approach is that computing the approximate dot product requires $m$ lookups and additions instead of a single lookup.
Clearly, increasing the values of $m$ and $k$ improves the quality of the approximation while also increasing search running time.

In order to improve the tradeoff between increasing running time and increasing the number of cluster centers, we propose projective clustering product quantization (PCPQ).
The motivation is to significantly increase the number of cluster centers (and thereby improve vector approximation)  
while, at the same time, having negligible increase in application running time.
The exact definition of projective clustering is given in Section \ref{sec:proj_clustering}. 
Unlike standard clustering, each data point is projected to a point along the direction of the center $c$. Specifically, point $i$ projects to $\alpha_i c_j$ where $c_j$ is one of $k$ ``centers'' in $\R^d$ and $\alpha_i \in \R$ is a scalar.
Since $\alpha_i$ is chosen for each point separately, the solution is strictly more general than clustering. 
Our experimental results below corroborate that this indeed provides a significant increase in approximation quality.
Unfortunately, it also requires storing an extra floating point for each point and section of coordinates. 
This is both inefficient in terms of memory consumption and slow in application time since each lookup-add in PQ is replaced with a lookup-multiply-add in PCPQ.

\begin{table}[!t]
\begin{center}
\begin{tabular}{|l|c|c|c|} \hline
                              & Centers   & Application Time   & Index Size          \\ \hline \hline
\parbox{6cm}{Clustering}                      & $k$     & $kd+n$      & $n\log_2(k)$      \\ 
\parbox{6cm}{Product Quantization (PQ)}               & $k^m$     & $kd+2nm$   & $nm\log_2(k)$   \\ 
\parbox{6cm}{Product Quantization (PQ) with $ks$ centers (for comparison)}  & $(ks)^m$  & $skd+2nm$    & $nm\log_2(ks)$    \\ 
Quantized Projective Clustering  (Q-PCPQ)                & $(ks)^m$  & $kd+skm+2nm$   & $nm\log_2(ks)$    \\ \hline
\end{tabular}
\end{center}
\caption{Number of centers, application time to compute inner-products of all points with a query and index size (in bits) for: vanilla clustering, clustering with product quantization, and quantized projective clustering.}
\label{table:clustering_comparison}
\vspace{0.5em}
\hrule
\vspace{-0em}
\end{table}%
To address this, we suggest a quantized version of PCPQ (Q-PCPQ). In Q-PCPQ each $\alpha_i$ is chosen from a set of at most $s$ values. 
As a result, using a lookup table of size $ks$ one can, again, use only $m$ lookup adds. 
Moreover, the number of cluster centers is $(ks)^m$ with Q-PCPQ compared with $k^m$ for PQ.
The increased cost is that the index size is $m\log(ks)$ bits per vector compared to $m\log(k)$ for PQ and the running time requires an additional $skm$ floating point multiplications. 
Note that one could trivially increase the number of centers with PQ to $(ks)^m$ by using $k'=ks$ clusters in each section.
However, with  $k$ typically taking values like $256$, and $s$ taking values like $16$ and $32$, the term $k' d = ksd$ in the query complexity will dominate the query time. Moreover, clustering vectors to $k' = 256 \times 32 \approx 7500$ is not usually feasible during index creation time, especially since the number of points indexed can be as large as $10^9$ in large datasets. We compare the number of centers, application time and index size for these methods in Table \ref{table:clustering_comparison}.

Moreover, we show that the choice of the optimal centers and scaling coefficients for PCPQ and Q-PCPQ can be made to be loss aware. 
Following the logic in \cite{guo2020accelerating}, one can weight the loss of $(\langle x_i,q \rangle - \langle \tilde{x}_i,q \rangle )^2$ higher if $\langle x_i,q \rangle$ is large and therefore $x_i$ is a likely match for $q$.  
Taking that into account, one gets an anisotropic cost function for the optimal cluster centers.  
In our experiments we show that the additional flexibility afforded by the scalars in Q-APCPQ make its recall performance better than ScaNN on standard benchmark datasets for MIPS.

\subsection{Related Works}
\paragraph{Rotation matrix.} In state-of-the-art implementations of PQ \cite{wu2017multiscale,guo2020accelerating,ge2013optimized,gong2012iterative,johnson2019billion}, typically, a rotation matrix $R \in \R^{d \times d}$ is applied to the data before performing clustering and applying PQ. The rotation matrix $R$ is typically learned to make the data more suitable for breaking the $d$ dimensions into sections in PQ. Several methods are used to learn and optimize the matrix $R$, including IPQ \cite{gong2012iterative}, OPQ \cite{ge2013optimized} and MSQ \cite{wu2017multiscale}. Our methods address the method of clustering and performing the product quantization, and are oblivious to pre-processing -- in particular, for a rotation matrix $R \in \R^{d \times d}$, we can approximate the inner-product $\iprod{x}{q}$ with $\sum_{i = 1}^m \iprod{R\alpha^{(i)}_x c^{(i)}_x}{Rq}$ where $\alpha^{(i)}_x, c^{(i)}_x$ denotes the scaling and center from the $i$-th section of Q-PCPQ (or Q-APCPQ) for the data point $x$. A natural future direction to study is whether one can jointly learn the centers, scaling and rotation matrix in order to increase the performance of Q-PCPQ and Q-APCPQ. \vspace{-1em}
\paragraph{Optimization methods.} Several works have focused on proposing novel methods of finding the centers and the assignment of points to centers. For instance, LSQ \cite{martinez2016revisiting} and LSQ++ \cite{martinez2018lsq++} uses a local search based method that outperform the traditional, simpler alternating-minimization based methods. Some methods are proposed to optimize the quality of the centers by introducing cross-training across sections when finding centers, e.g. CompQ \cite{ozan2016competitive} is a stochastic gradient descent based method that achieves this by doing a joint training across sections and CQ \cite{zhang2014composite} is a method that penalizes an error dependent on centers across sections. Our method introduces a novel clustering method, along with a optimization problem for PQ which we optimize using a simple alternating-minimization based approach. We leave the development of more advanced optimization methods for future work.  \vspace{-1em}
\paragraph{Non-PQ Methods.} There are several works that approach MIPS using techniques different than PQ. One popular method is \emph{locality sensitive hashing} LSH, which has seen a flurry of work \cite{NIPS2014_310ce61c,shrivastava2015assymetric,huang2018accurate,neyshabur2015symmetric} for MIPS tasks. In addition, several recent works \cite{sablayrolles2018spreading,Dong2020Learning,erin2015deep,jain2017subic,klein2019end,sablayrolles2017should} have looked at using a neural network to embed the input points and performing the search in the embedded space. Comparison of these methods against PQ methods, such as ours, is a vital research interest to better understand the performance of MIPS systems and we leave this to future work.

\subsection{Preliminaries}
Throughout we denote $X \in \R^{n \times d}$ to be the matrix defined by the data points $x_1, \dots, x_n$ and define $\bar{d} = d/m$ to be the dimension of each of the $m$ sections of coordinates in PQ. We sometimes write $A \subseteq X$ to denote a subset $A$ of the data points in $X$. Additionally, to define the quality of approximation from clustering for MIPS, we denote $\Qc$ to be a distribution over a set of queries in $\R^d$.

\section{Projective Clustering}\label{sec:proj_clustering}
\begin{figure}
  \includegraphics[width=\linewidth]{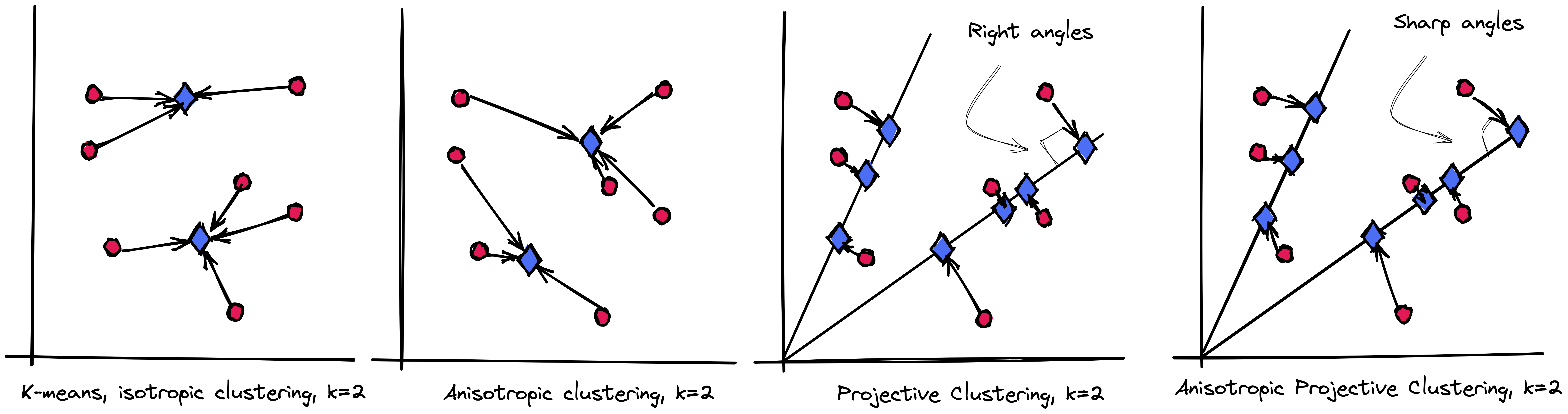}
  \caption{An illustration of $k$-clustering, projective $k$-clustering and their anisotropic counterparts for $k = 2$. While centers in $k$-clustering and anisotropic $k$-clustering are points, for their projective counterparts, they correspond to one-dimensional subspaces. Anisotropic projective clustering has the effect of pushing the centers further away from the origin.}
  \label{fig:clustering}
\vspace{0.5em}
\hrule
\vspace{-1em}
\end{figure}
In projective $k$-clustering, our goal is to find $c_1, \dots, c_k \in \R^{\bar{d}}$ and scalars $\alpha_1, \dots, \alpha_n$ which minimize the following objective: 
\begin{equation}\label{eqn:pq_loss}
\min_{\substack{c_1, \dots, c_k \in \R^{\bar{d}} \\ \alpha_1, \dots, \alpha_n}}  \sum_{i =1}^n \min_{j \in [k]} \Ex_{q \sim \Qc}[ (\iprod{q}{x_i} - \iprod{q}{\alpha_i c_j})^2].
\end{equation} 
When the query distribution $\Qc$ is isotropic, and $\alpha_i$ are constrained to the value $1.0$, the above minimization problem reduces to the ubiquitous \emph{$k$-means clustering} problem. When $\alpha_i$ are allowed to take on any real value, minimizing the loss function from \eqref{eqn:pq_loss} reduces to the popular \emph{$k$-projective clustering} problem; see Appendix \ref{appx:proj_clustering_reduction} for proofs.

\begin{definition}\label{def:projective_clustering}
The \emph{$k$-projective clustering problem} for points $X \in \R^{n \times \bar{d}}$ is a set of $k$ points $c_1, \dots, c_k \in \R^{\bar{d}}$ such that the following is minimized 
\begin{equation}\label{eqn:proj_clustering_loss}
\min_{c_1, \dots, c_k \in \R^{\bar{d}}} \sum_{i = 1}^n \min_{j \in [k]} \left\|x_i - \frac{\iprod{x_i}{c_j}}{\|c_j\|_2^2} \cdot c_j \right \|_2^2.
\end{equation}
In words, the cost incurred by a point $x$ is the cost of projecting it onto the direction (from the choice of $k$ directions given by $c_1, \dots, c_k$) that minimizes the projection cost. 
\end{definition}

For a set of centers $c_1, \dots, c_k \in \R^{\bar{d}}$, and for each $j \in [k]$, let $X_j \subseteq X$ denote the set of all points in $X$ that map to center $c_j$. For each $X_j$, denote $u_j, v_j \in \R^{\bar{d}}$ to be the top left and right singular vectors of $X_j$ respectively and let $\sigma_j \in \R^{\geq 0}$ be the top singular value. 

By the Eckhart-Young-Mirsky Theorem \cite{eckart1936approximation}, we must have that $c_j = v_j$ and $X_jc_j = \sigma_j \cdot u_j$ for all $j \in [x]$ and hence, we can simplify the minimization problem from \eqref{eqn:proj_clustering_loss} as follows: 
\begin{align*}
\min_{\{c_j\}_{j=1}^k} \sum_{i = 1}^n \min_{j \in [k]} \left\|x_i - \frac{\iprod{x_i}{c_j}}{\|c_j\|_2^2} \cdot c_j \right \|_2^2 &= \min_{\{c_j\}_{j=1}^k} \sum_{j = 1}^k \sum_{x \in X_j} \left\|x - \frac{\iprod{x}{c_j}}{\|c_j\|_2^2} \cdot c_j \right \|_2^2  \\
&= \min_{\{X_j\}_{j=1}^k} \sum_{j = 1}^k \sum_{x \in X_j} \left\|x - \sigma_j u_{jx} \cdot v_j \right \|_2^2 \\
&= \min_{\{X_j\}_{j=1}^k} \sum_{j = 1}^k \left\|X_j - \sigma_j u_{j} v_j^\top \right \|_F^2 \\
&= \min_{\{X_j\}_{j=1}^k} \sum_{j = 1}^k \left \|X_j\right \|_F^2 - \sigma_j^2 \\
&= \left\|X\right\|_{F}^2 - \max_{\{X_j\}_{j=1}^k} \sum_{j = 1}^k \sigma_j^2
\end{align*}
where the minimization over $\{X_j\}_{j =1}^k$ is a minimization over all partitions of the input $X$ into $k$ parts. It follows then that once the partition $X_1, \dots, X_k$ is computed, the centers are given by $c_j = v_j$ for $j \in [k]$ and the optimal scalar $\alpha_i$ for the $i$-th point is the projection $\alpha_i = \sigma_ju_{ji}$. 

The running time of applying the projective clustering approximation is asymptotically identical to that of $k$-means and produces significantly more accurate results. This is unsurprising since projective clustering is more expressive than $k$-means. Moreover, the optimal solution for approximating a collection of vectors by a single vector is well known: it is the top singular vector of the collection of vectors rather than their geometric center, which is the optimal solution for $k$-means. 
Building on these encouraging results, we proceed to investigate the quantized version of projective clustering.

\subsection{Quantized Projective $k$-Clustering}\label{sec:quant_proj}
Recall that our goal is to quantize the projections $\alpha_1, \dots, \alpha_n$ of the points $x_1, \dots, x_n$ to $s \in \N$ values. Given the centers $c_1, \dots, c_k$, we can write the quantized version of the minimization problem from above as follows.
\begin{definition}
In the \emph{quantized projective $k$-clustering} problem, a partition $\{X_j\}_{j \in [k]}$ of the data $X$ (corresponding to clusters) is given and the goal is to find a set of vectors $\{\bar{u}_j \in \R^{\bar{d}}\}_{j \in [k]}$ which contain at most $s$ distinct entries while minimizing the following objective: 
\begin{equation}\label{eqn:qpcpq_loss}
\min_{\{\bar{u}_j\}_{j\in[k]}} \sum_{j =1}^k \|X_j - \bar{u}_jv_j^\top\|_F^2
\end{equation}
where $v_j \in \R^{\bar{d}}$ is the top right singular vector of $X_j$. 
\end{definition} 
It can be shown that in fact, the objective function from \eqref{eqn:qpcpq_loss} is upper bounded by the cost of the optimal clustering plus the cost of doing a one-dimensional $k$-means clustering of the projections $\alpha_1, \dots, \alpha_n$ with $k = s$. We state this specifically in the following fact and give a proof in Appendix \ref{appx:qpcpq}.  
\begin{fact}\label{fact:qpcpq_loss_reduction}
The loss in \eqref{eqn:qpcpq_loss} is upper-bounded by: 
$$ \min_{\{\bar{u}_j\}_{j\in[k]}} \sum_{j =1}^k \|X_j - \sigma_ju_jv_j^\top\|_F^2 + \|\sigma_ju_j - \bar{u}_j\|_2^2 $$
where $u_j$ and $\sigma_j$ are the left singular vector and top singular value of $X_j$ respectively. 
\end{fact}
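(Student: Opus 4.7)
The plan is to prove the stronger statement that \emph{equality} holds for each fixed choice of $\{\bar{u}_j\}_{j\in[k]}$, from which the claimed upper bound on the minimum follows immediately. Fix $j \in [k]$ and decompose
\[ X_j - \bar{u}_j v_j^\top \;=\; R_j \;+\; (\sigma_j u_j - \bar{u}_j)\, v_j^\top, \qquad R_j := X_j - \sigma_j u_j v_j^\top. \]
The goal is then to evaluate $\|X_j - \bar{u}_j v_j^\top\|_F^2$ via a Pythagorean-style expansion in which the cross term vanishes. This is natural because $R_j$ is the residual of the best rank-one approximation of $X_j$, and the perturbation $(\sigma_j u_j - \bar{u}_j) v_j^\top$ lives in the one-dimensional row-direction spanned by $v_j^\top$.

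The key step is the identity $R_j v_j = 0$. Since $v_j$ is the top right singular vector of $X_j$ with top singular value $\sigma_j$ and left singular vector $u_j$, we have $X_j v_j = \sigma_j u_j$; combined with $(\sigma_j u_j v_j^\top) v_j = \sigma_j u_j \|v_j\|_2^2 = \sigma_j u_j$, this gives $R_j v_j = 0$. Hence the Frobenius cross term satisfies
\[ 2 \langle R_j,\, (\sigma_j u_j - \bar{u}_j) v_j^\top \rangle_F \;=\; 2\, (\sigma_j u_j - \bar{u}_j)^\top (R_j v_j) \;=\; 0, \]
while the rank-one remainder simplifies as $\|(\sigma_j u_j - \bar{u}_j) v_j^\top\|_F^2 = \|\sigma_j u_j - \bar{u}_j\|_2^2$ using $\|v_j\|_2 = 1$. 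Consequently, for every admissible $\bar{u}_j$,
\[ \|X_j - \bar{u}_j v_j^\top\|_F^2 \;=\; \|X_j - \sigma_j u_j v_j^\top\|_F^2 \;+\; \|\sigma_j u_j - \bar{u}_j\|_2^2. \]

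Summing this identity over $j = 1, \ldots, k$ and taking the minimum over all collections $\{\bar{u}_j\}_{j\in[k]}$ satisfying the at-most-$s$-distinct-entries constraint immediately yields the stated bound (in fact as an equality, which is stronger than needed). There is no substantive obstacle in the argument: the only care required is in invoking the SVD relation $X_j v_j = \sigma_j u_j$ and in expanding the Frobenius inner product $\langle A,\, u v^\top \rangle_F = u^\top A v$ so that the cross term collapses to $(\sigma_j u_j - \bar{u}_j)^\top (R_j v_j)$ and vanishes.
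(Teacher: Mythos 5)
Your proof is correct and uses the same decomposition as the paper: write $X_j - \bar{u}_j v_j^\top$ as the rank-one residual $R_j = X_j - \sigma_j u_j v_j^\top$ plus the perturbation $(\sigma_j u_j - \bar{u}_j)v_j^\top$, and split the Frobenius norm. The difference is that you justify the split: the paper simply asserts $\|A+B\|_F^2 \le \|A\|_F^2 + \|B\|_F^2$, which is not a valid inequality for general matrices (the cross term $2\langle A,B\rangle_F$ need not be nonpositive), whereas you observe that here the cross term is exactly zero because $R_j v_j = X_j v_j - \sigma_j u_j = 0$. So your argument both repairs the one questionable step in the paper's derivation and yields the stronger conclusion that the bound holds with equality term by term, not merely as an upper bound after minimization.
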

It is now easy to see that minimizing the loss $\sum_{j = 1}^k\|\sigma_ju_j - \bar{u}_j\|_2^2$ is simply a $k$-means problem in one dimension with $k = s$, hence this can be easily computed after finding the centers and the projections for each section.

\section{Anisotropic Projective Clustering}
\begin{figure}[!t]
\begin{center}
  \includegraphics[width=0.8\linewidth]{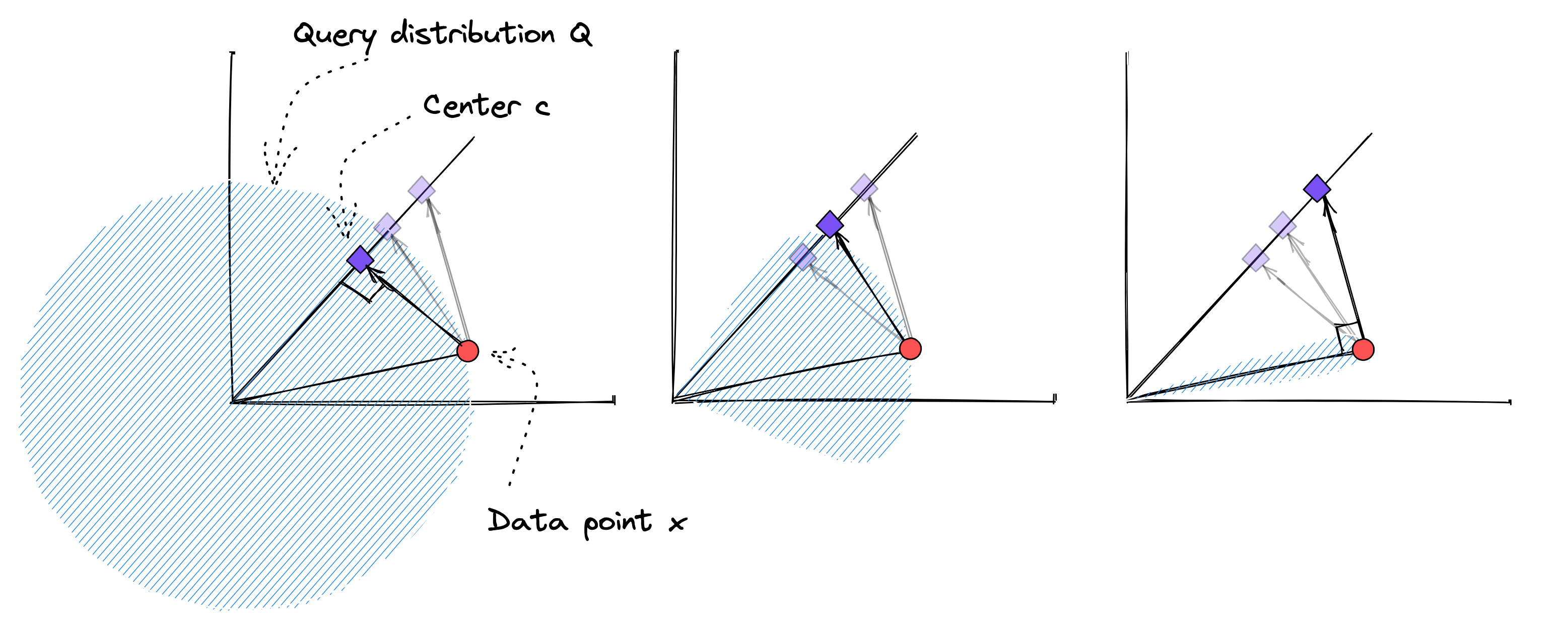}
  \caption{The optimal scaling $\alpha$ for anisotropic projective clustering with differing query distributions. The weight function $\id_t : \R \to \{0, 1\}$ essentially restricts the queries in the distribution $\Qc$ on which the loss $\langle q, x-c \rangle^2$ is measured by outputting $0$ on all the queries $q$ for which $\langle q ,x \rangle < t$. When the distribution is isotropic, $x$ is mapped to its projection. When $q = x$, the point $x$ is mapped to a center $c$ such that the inner-product $\langle c, q\rangle = \langle x, q \rangle = \|x\|_2^2$.
  }
  \label{fig:anisotropic_clustering}
\vspace{0.5em}
\hrule
\vspace{-0em}
\end{center}
\end{figure}
The objective function in \eqref{eqn:pq_loss} takes the expectation over the entire query distribution $\Qc$, however, as observed by \citet{guo2020accelerating}, not all pairs of points and queries are equally important since in the MIPS setting we are interested in preserving the inner-product $\iprod{x}{q}$ only if it is large (and hence a candidate for a nearest neighbor). This observation motivates the work of \citet{guo2020accelerating} that introduces a weight function in the loss given by \eqref{eqn:pq_loss}. Specifically, given a threshold $t > 0$, let $\id_t(x)$ be a weight function that is $1$ when $x \geq t$ and $0$ when $x < t$. Then, they consider the following \emph{score-aware loss} function for clustering: 
\begin{equation}\label{eqn:anisotropic_loss}
\min_{\substack{c_1, \dots, c_k \in \R^{\bar{d}} \\ \alpha_1, \dots, \alpha_n}} \Ex_{q \sim \Qc}[ \sum_{i =1}^n \min_{j \in [k]} \id_t(\iprod{x_i}{q}) \cdot (\iprod{q}{x_i} - \iprod{q}{ \alpha_ic_j})^2]
\end{equation}
where $\alpha_i$ is fixed to be $1.0$ and where $t > 0$ is a tuneable input-specific parameter\footnote{When the input is unit norm, it is suggested to set $t = 0.2$ \cite[Section 3.2]{guo2016quantization}.}. The threshold function $\id_t$ defines a cone of vectors (of unit length) around the point $x_i$. It  essentially ``selects'' the queries from the support of $\Qc$ for which the inner-product between $x_i$ and the queries should be preserved. Figure \ref{fig:clustering} depicts this intuition, showing that anisotropic (projective) clustering finds centers that have larger norm than in plain vanilla clustering so as to preserve the inner-products with the relevant queries, defined by the threshold function.

As shown in \citet{guo2020accelerating}, the above loss (per point) can be written as a linear combination of the error introduced by the component of the residual $x_i - c_j$ that is parallel to $x_i$ and the component that is orthogonal to $x_i$. Formally, define $r_\parallel(x, c) \eqdef x - \frac{\iprod{x}{c}}{\|x\|_2^2}x$ and $r_\bot(x, c) \eqdef  \frac{\iprod{x}{c}}{\|x\|_2^2}x - c$, then, it was shown in \citet[Theorem 3.2]{guo2020accelerating} that the loss in \eqref{eqn:anisotropic_loss} is equivalent to minimizing the loss 
$$\min_{\substack{c_1, \dots, c_k \in \R^{\bar{d}}}} \sum_{i = 1}^n \min_{j \in [k]} h_\parallel(\|x_i\|_2) \cdot \| r_\parallel(x, c_j) \|_2^2 + h_\bot(\|x_i\|_2) \cdot \|r_\bot(x, c_j)\|_2^2$$
where $h_\parallel(y) = (\bar{d}-1)\int_0^{t/y} \sin^{\bar{d}-2}\theta - \sin^{\bar{d}} \theta d\theta$ and $h_\bot(y) = \int_0^{t/y} \sin^{\bar{d}}\theta d\theta$. 

Intuitively, the functions $h_\parallel$ and $h_\bot$ represent how the error from the parallel component of the residual and the orthogonal component of the residual should be weighted depending on the weight-function $\id_t$ in the loss \eqref{eqn:anisotropic_loss}. It can be shown that for any $t \geq 0$ we have that $h_\parallel \geq h_\bot$, i.e. the parallel component of the error is weighted higher. We defer further discussion of $h_\parallel$ and $h_\bot$ to Appendix \ref{appx:hvalues_discussion}.

A natural way to extend the above loss, as in projective $k$-clustering, is to remove the restriction on the value of $\alpha_1, \dots, \alpha_n$ and allow it to take on any real value. We term this the \emph{anisotropic projective $k$-clustering problem} which we define below. 
\begin{definition}\label{def:projective_anisotropic_clustering}
The \emph{anisotropic projective $k$-clustering problem} for a set of points $X \in \R^{n \times \bar{d}}$ aims to find a set of $k$ points $c_1, \dots, c_k \in \R^{\bar{d}}$ and $n$ scalars $\alpha_1, \dots, \alpha_n \in \R$ such that the following is minimized
\begin{equation}\label{eqn:proj-anisotropic-loss}
\min_{\substack{c_1, \dots, c_k \in \R^{\bar{d}} \\ \alpha_1,\ldots,\alpha_n \in \R}}\sum_{i = 1}^n \min_{j \in [k]} \left(h_\parallel(\|x_i\|_2) \cdot \|r_\parallel(x_i, \alpha_ic_j)\|_2^2 + h_\bot(\|x_i\|_2) \cdot \|r_\bot(x_i, \alpha_ic_j)\|_2^2 \right).
\end{equation}
\end{definition}
We show from a simple calculation that in fact, for a fixed $c_j$ and $x_i$, one can compute the optimal $\alpha_i \in \R$ that minimizes the loss in $\eqref{eqn:proj-anisotropic-loss}$. Intuitively, the optimal scaling pushes the center away from the projection in the direction of $c_j$ depending on which queries have weight $1$ in the loss function \eqref{eqn:proj-anisotropic-loss}; Figure \ref{fig:anisotropic_clustering} depicts this intuition.  
\begin{fact}\label{fact:opt_alpha_aniso}
For vectors $x, c \in \R^{\bar{d}}$, we have 
$$\argmin_{\alpha \in \R} \ h_\parallel \cdot \left\|x - \frac{\iprod{x}{\alpha c}}{\|x\|_2^2}x \right\|_2^2 + h_\bot \cdot \left\|\alpha c - \frac{\iprod{x}{\alpha c}}{\|x\|_2^2}x \right\|_2^2 = \frac{h_\parallel \cdot \iprod{x}{c}}{\frac{(h_\parallel - h_\bot)\iprod{x}{c}^2}{\|x\|_2^2} + h_\bot \cdot \|c\|_2^2}$$

where $h_\parallel$ and $h_\bot$ are abbreviations for $h_\parallel(\|x\|_2)$ and $h_\bot(\|x\|_2)$ respectively. 
\end{fact}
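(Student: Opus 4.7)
The plan is to recognize the objective as a one-dimensional convex quadratic in $\alpha$ and read off its unique minimizer from the first-order condition. The key observation is that both terms inside the norms admit clean factorizations. The parallel term $x - \frac{\iprod{x}{\alpha c}}{\|x\|_2^2}x$ is simply $\bigl(1 - \alpha \iprod{x}{c}/\|x\|_2^2\bigr) x$, a scalar multiple of $x$, so its squared norm equals $\bigl(1 - \alpha \iprod{x}{c}/\|x\|_2^2\bigr)^2 \|x\|_2^2$. Similarly, $\alpha c - \frac{\iprod{x}{\alpha c}}{\|x\|_2^2}x = \alpha\bigl(c - \frac{\iprod{x}{c}}{\|x\|_2^2}x\bigr)$ is $\alpha$ times the component of $c$ orthogonal to $x$, whose squared norm is $\|c\|_2^2 - \iprod{x}{c}^2/\|x\|_2^2$ by the standard orthogonal decomposition of $c$ along and against $x$.

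Substituting these into the objective gives the one-variable function
$$L(\alpha) \;=\; h_\parallel \Bigl(1 - \alpha \tfrac{\iprod{x}{c}}{\|x\|_2^2}\Bigr)^2 \|x\|_2^2 \;+\; h_\bot \alpha^2 \Bigl(\|c\|_2^2 - \tfrac{\iprod{x}{c}^2}{\|x\|_2^2}\Bigr),$$
which is a (strictly) convex quadratic in $\alpha$. Differentiating and setting $L'(\alpha) = 0$ yields
$$h_\parallel \tfrac{\iprod{x}{c}}{\|x\|_2^2}\|x\|_2^2 \bigl(1 - \alpha \tfrac{\iprod{x}{c}}{\|x\|_2^2}\bigr) \;=\; h_\bot\, \alpha \Bigl(\|c\|_2^2 - \tfrac{\iprod{x}{c}^2}{\|x\|_2^2}\Bigr).$$
Collecting the $\alpha$-terms on one side, using $\iprod{x}{c}/\|x\|_2^2 \cdot \|x\|_2^2 = \iprod{x}{c}$ on the right, and simplifying the coefficient of $\alpha$ to $(h_\parallel - h_\bot)\iprod{x}{c}^2/\|x\|_2^2 + h_\bot \|c\|_2^2$ after cancellation of the cross term, we obtain the claimed expression upon division.

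There is no substantive obstacle here beyond careful bookkeeping. The only mild points to verify are: (i) the denominator is nonnegative, which follows from Cauchy--Schwarz (so $\|c\|_2^2 - \iprod{x}{c}^2/\|x\|_2^2 \geq 0$) together with $h_\parallel \geq h_\bot \geq 0$ as noted earlier in the paper, ensuring that the critical point is indeed the global minimum whenever the quadratic is nondegenerate; and (ii) the degenerate cases $c \parallel x$ or $c = 0$ can be handled separately by direct inspection of $L(\alpha)$, and in the $c \parallel x$ case one readily checks that the formula still returns the correct minimizer $\alpha = \|x\|_2^2/\iprod{x}{c}$.
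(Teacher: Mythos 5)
Your proof is correct and is exactly the ``simple calculation'' the paper alludes to (the paper gives no explicit proof): factor the parallel residual as $(1-\alpha\iprod{x}{c}/\|x\|_2^2)x$ and the orthogonal residual as $\alpha$ times the component of $c$ orthogonal to $x$, reduce to a convex quadratic in $\alpha$, and solve the first-order condition. Your additional checks that the leading coefficient equals the denominator of the claimed formula (hence nonnegative by Cauchy--Schwarz and $h_\parallel \geq h_\bot \geq 0$) and that the degenerate case $c \parallel x$ is still handled correctly are welcome bookkeeping the paper omits.
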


\subsection{Quantized Anisotropic Projective Clustering}\label{sec:quant_aniso}
The quantized version of anisotropic projective $k$-clustering can be written by modifying the minimization problem from \eqref{eqn:proj-anisotropic-loss} like it was done for projective $k$-clustering. Specifically, given the centers $c_1, \dots, c_k$, the goal is to minimize the following loss: 
\begin{equation}
\min_{\substack{\lambda_1,\ldots,\lambda_s}}\sum_{i = 1}^n \min_{\substack{j \in [k], \ l \in [s]}} \left(h_{\parallel}(\|x_i\|_2) \cdot \|r_\parallel(x_i, \lambda_lc_j)\|_2^2 + h_\bot(\|x_i\|_2) \cdot \|r_\bot(x_i, \lambda_lc_j)\|_2^2 \right). \label{eqn:quantized_aniso_proj_loss}
\end{equation}
Using the definition of $r_\parallel$ and $r_\bot$, the term in the summation can be written in the form $w_i\lambda_l^2 + a_i\lambda_l + b_i$ where $w_i, a_i, b_i \in \R$ are constants\footnote{$w_i = (h_\parallel  +h_\bot)\iprod{x_i}{c_j}(\iprod{x_i}{c_j}-2) + h_\bot\|c_j\|_2^2$, $a_i = -2\iprod{x_i}{c_j}$ and $b_i = h_\parallel \|x_i\|_2^2$.} that are independent of $\lambda_1, \dots, \lambda_s$ and only dependent on the point $x_i$ and the center it maps to. Hence, the above minimization problem can be simplified as follows: 
$$\min_{\lambda_1, \dots, \lambda_s} \sum_{i = 1}^n \min_{l \in [s]} w_i\lambda_l^2 + a_i\lambda_l + b_i.$$ 
Notice that this is a quadratic loss for $s$ variables in one-dimension for which we can find a fixed point efficiently by doing an alternating minimization (with random initialization) since the minimization problem can be solved exactly for a single scalar.

\section{Efficient Algorithms and Application to PQ}
In this section we show how to implement Q-PCPQ and Q-APCPQ in two stages: first we give efficient algorithms in Section \ref{sec:aniso_proj_clustering_alg} and \ref{sec:aniso_proj_clustering_alg} to compute the centers and scalars for projective $k$-clustering (PCPQ) and anisotropic projective $k$-clustering (APCPQ) respectively, and then, in Section \ref{sec:pq_alg}, we state how to create the index for Q-PCPQ and Q-APCPQ by using the methods from Section \ref{sec:quant_proj} and \ref{sec:quant_aniso} to quantize the scalars we computed and store the mapping of the points to their respective centers and quantized scalars. In the same section, we show how the index for Q-PCPQ and Q-APCPQ is used to compute the inner-product between a query vector and every point. 

Our main approach to compute the centers and scalars is to use an alternating-minimization algorithm to minimize the loss function \eqref{eqn:proj_clustering_loss} for PCPQ and \eqref{eqn:proj-anisotropic-loss} for APCPQ. Alternating-minimization has been a long-standing approach on finding the clustering in the literature for product quantization \cite{gong2012iterative,ge2013optimized,wu2017multiscale,guo2020accelerating}.

\subsection{Projective $k$-clustering}\label{sec:proj_clustering_alg}
The projective $k$-clustering problem is a NP-Hard problem for $k \geq 2$ even in the 2D-plane, i.e. $\bar{d} = 2$. Nevertheless, several lines of work have proposed fast approximation algorithms -- including algorithms inspired by Lloyd's algorithm for k-means that iteratively find $k$ one-dimensional subspaces \cite{agarwal2004k}, a monte-carlo sampling method \cite{procopiuc2002monte} and coreset based algorithms \cite{feldman2020turning,statman2020faster} that sample (and reweight) a small subset of the input points in such a way that the cost of clustering the subset of points is approximately the same as clustering the entire input, allowing the use of any black-box algorithm (that is potentially computationally intractable on large datasets) on the much smaller set of points.

The algorithm we propose is similar to Lloyd's algorithm and iterates between an \emph{assignment step} and a \emph{center finding step} -- in each round the algorithm i) maps the points to the closest center by projecting every point to each of the directions (given by the centers) and picks the one that minimizes the squared distance of the point to its projection and ii) for each center, the algorithm considers all the points that mapped to the center in step i) and computes the top-right singular vector of those points to be the new center. It is easy to see from the Eckart-Young-Mirsky Theorem \cite{eckart1936approximation} that the top-right singular vector in fact minimizes the loss from \eqref{eqn:proj_clustering_loss} when $k = 1$.

Formally, we consider the following iterative procedure: 
\begin{enumerate}[1.]
	\item{(Initialization)} Sample $k$ random points $c_1, \dots, c_k$ from the data $X$. 
	\item{(Assignment)} For each point $x \in X$: let $j^* = \argmin_{j \in [k]} \|x - \frac{\iprod{x}{c_j}}{\|c_j\|_2^2}c_j\|_2$, set $\alpha_i = \frac{\iprod{x}{c_{j^*}}}{\|c_{j^*}\|_2^2}$ and assign $c_{j^*}$ to be the center of $x$.
	\item{(Center Finding)} For each $j \in [k]$: denote $X_j \subset X$ be the subset of the points that were assigned to center $j$ in Step 2, compute the top-right singular vector $v$ of $X_j$, i.e. $v = \argmin_{\|y\|=1}\|X_j - X_jyy^\top\|_F^2$, and set $c_j = v$.  
	\item{(Termination)} Repeat Step 2 and 3 until convergence or a maximum number of iterations has been reached.
\end{enumerate}

\paragraph{Initialization.}
In practice, as in our experiments, we can hope to find better solutions (centers) by initializing the centers using an off-the-shelf algorithm for another clustering problem, like $k$-means++. Recently, the work of \cite{statman2020faster} proposed a different initialization that provably produces a $O(\log(k))$-approximation for the projective $k$-clustering problem. The algorithm first normalizes all the points in $X$ so they have unit norm and then samples $k$ points with a slight variant of the k-means++ initialization algorithm -- specifically, the algorithm maintains a set $S$ and, for $k$ iterations, samples a point $x \in X \backslash S$ to add to $S$ with probability proportional to $\min(\min_{s \in S}\|x - s\|_2^2, \min_{s \in S}\|x + s\|_2^2)$. Empirically, on the datasets we use, we found that initializing the centers with $k$-means++ produces centers with similar loss as this algorithm while being able to take advantage of optimized implementations of $k$-means++ leading to significant gains in indexing time. 

\subsection{Anisotropic Projective $k$-clustering}\label{sec:aniso_proj_clustering_alg}
We propose an alternating-minimization framework for anisotropic projective $k$-clustering, similar to the algorithm from Section \ref{sec:proj_clustering_alg}. The difference lies in the assignment stage, where we need to compute the optimal $\alpha_i$ for each data point $x_i \in X$ with respect to the anisotropic projective clustering loss from Definition \ref{def:projective_anisotropic_clustering} and in the center finding stage where, for each subset of points in $X$ that are mapped to the same center, we need to compute the center that minimizes the aforementioned loss.

We showed in Fact \ref{fact:opt_alpha_aniso} that computing the optimal $\alpha_i$ for each $x_i$ can be done when the centers are fixed. It is left then to show how to compute the centers in the center finding stage ($k=1$). \citet{guo2020accelerating} show that when $\alpha_i = 1$, there is a closed form solution to finding the center that minimizes the loss from Definition \ref{def:projective_anisotropic_clustering}. A simple adaptation of their proof gives us a closed form solution to the aforementioned loss when $\alpha_i$ is any arbitrary fixed scalar.
\begin{theorem}[Theorem 4.2, \cite{guo2020accelerating}]\label{thm:opt_center_anisotropic_clustering}
For $n$ points $X \in \R^{n \times \bar{d}}$, we have that  
$$c^* \eqdef \brackets{\sum_{i=1}^n\frac{\alpha_i^2(h_{\parallel, i} - h_{\bot, i})}{\|x_i\|_2^2}x_ix_i^\top + I \cdot \alpha_i^2h_{\bot, i}}^{-1}\sum_{i = 1}^n \alpha_i h_{\parallel, i}x_i$$
is the solution to the minimization problem
$\argmin_{c \in \R^{\bar{d}}} \sum_{i = 1}^n h_{\parallel, i} \|r_\parallel(x_i, \alpha_i c)\|_2^2 + h_{\bot, i} \|r_\bot(x_i, \alpha_i c)\|_2^2$ 
and where $h_{\parallel, i}$ and $h_{\bot, i}$ are abbreviations for $h_\parallel(\|x_i\|_2)$ and $h_\bot(\|x_i\|_2)$ respectively.
\end{theorem}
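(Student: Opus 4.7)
The plan is to observe that, for fixed data $\{x_i\}$, fixed scalars $\{\alpha_i\}$ and fixed weights $\{h_{\parallel,i}, h_{\bot,i}\}$, the per-point cost is a quadratic function of $c$, so the full objective is a (convex) quadratic in $c$ whose unique minimizer is found by setting the gradient to zero and inverting the resulting linear system. This mirrors the $\alpha_i = 1$ argument of \cite[Theorem 4.2]{guo2020accelerating}; the only adaptation is to carry the extra factor of $\alpha_i$ through because $\alpha_i c$ appears linearly inside both $r_\parallel$ and $r_\bot$.

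First I would expand the two residual norms in closed form. Since $r_\parallel(x_i, \alpha_i c)$ lies along $x_i$, a direct calculation gives
\[
\|r_\parallel(x_i, \alpha_i c)\|_2^2 \;=\; \|x_i\|_2^2 \;-\; 2\alpha_i \iprod{x_i}{c} \;+\; \frac{\alpha_i^2 \iprod{x_i}{c}^2}{\|x_i\|_2^2},
\]
and expanding $r_\bot(x_i, \alpha_i c) = \frac{\alpha_i\iprod{x_i}{c}}{\|x_i\|_2^2}x_i - \alpha_i c$ yields
\[
\|r_\bot(x_i, \alpha_i c)\|_2^2 \;=\; \alpha_i^2 \|c\|_2^2 \;-\; \frac{\alpha_i^2 \iprod{x_i}{c}^2}{\|x_i\|_2^2}.
\]
Substituting these into the $i$-th summand, the quadratic terms in $\iprod{x_i}{c}^2$ combine with coefficient $\alpha_i^2(h_{\parallel,i}-h_{\bot,i})/\|x_i\|_2^2$, rewriting the per-point cost as
\[
c^\top\!\Bigl(\tfrac{\alpha_i^2(h_{\parallel,i}-h_{\bot,i})}{\|x_i\|_2^2}\, x_i x_i^\top + \alpha_i^2 h_{\bot,i}\, I\Bigr)c \;-\; 2\alpha_i h_{\parallel,i}\,\iprod{x_i}{c} \;+\; h_{\parallel,i}\|x_i\|_2^2.
\]

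Summing over $i$ produces $F(c) = c^\top A c - 2 b^\top c + \text{const}$, where
\[
A \;=\; \sum_{i=1}^n \Bigl(\tfrac{\alpha_i^2(h_{\parallel,i}-h_{\bot,i})}{\|x_i\|_2^2}\, x_i x_i^\top + \alpha_i^2 h_{\bot,i}\, I\Bigr), \qquad b \;=\; \sum_{i=1}^n \alpha_i h_{\parallel,i}\, x_i.
\]
Since the paper already notes $h_{\parallel,i} \geq h_{\bot,i} \geq 0$, each rank-one contribution $\tfrac{\alpha_i^2(h_{\parallel,i}-h_{\bot,i})}{\|x_i\|_2^2}x_ix_i^\top$ is positive semidefinite and the $\alpha_i^2 h_{\bot,i} I$ summand is a nonnegative multiple of the identity, so $A$ is positive semidefinite; in the non-degenerate regime where at least one $\alpha_i^2 h_{\bot,i} > 0$, $A$ is in fact positive definite and $F$ is strictly convex. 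Setting $\nabla F(c) = 2(A c - b) = 0$ gives the unique minimizer $c^* = A^{-1}b$, which is exactly the stated formula. The only real obstacle is the algebraic bookkeeping in expanding the two squared residuals and verifying that the cross terms in $\iprod{x_i}{c}$ collapse into the single coefficient $h_{\parallel,i} - h_{\bot,i}$; nothing conceptually new beyond \cite{guo2020accelerating} is required.
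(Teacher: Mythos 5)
Your proposal is correct and follows exactly the route the paper intends: the paper gives no separate proof, instead noting that the result is "a simple adaptation" of the $\alpha_i=1$ argument in \cite[Theorem 4.2]{guo2020accelerating}, which is precisely the quadratic expansion and gradient-zeroing you carry out, with the $\alpha_i$ factors tracked through both residuals. Your expansions of $\|r_\parallel(x_i,\alpha_i c)\|_2^2$ and $\|r_\bot(x_i,\alpha_i c)\|_2^2$ check out, and the resulting $A$ and $b$ match the stated $c^*=A^{-1}b$; the remark on positive (semi)definiteness via $h_{\parallel,i}\geq h_{\bot,i}$ is a welcome addition the paper leaves implicit.
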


We propose the following iterative procedure for the anisotropic projective $k$-clustering problem. 
\begin{enumerate}[1.]
	\item{(Initialization)} Sample $k$ random points $c_1, \dots, c_k$ from the data $X$. 
	\item{(Assignment)} For each point $x_i \in X$: compute $\beta_{i}[j] \eqdef {h_{\parallel, i}  \iprod{x_i}{c_j}} ({\frac{(h_{\parallel, i} - h_{\bot, i})\iprod{x_i}{c_j}^2}{\|x_i\|_2^2} + h_{\bot,i} \cdot \|c_j\|_2^2})^{-1}$ for each $j \in [k]$. Compute $j^* = \argmin_{j \in [k]} \|x_i - \beta_{i}[j]c_j\|_2$, then, set $\alpha_i = \beta_{i}[j^*]$ and $c_{j^{*}}$ to be the center for $x_i$.
	\item{(Center Finding)} For each $j \in [k]$: denote $X_j \subset X$ be the subset of the points that were assigned to center $j$ in Step 2, set $c_j$ to be
	$$\brackets{\sum_{\substack{i \in [n] \text{ s.t.}\\ x_i \in X_j}}\frac{\alpha_i^2(h_{\parallel, i} - h_{\bot,i})}{\|x_i\|_2^2}xx^\top + I \cdot \alpha_i^2h_{\bot, i}}^{-1}\sum_{\substack{i \in [n] \text{ s.t.}\\ x_i \in X_j}} \alpha_i h_{\parallel, i} \cdot x_i$$  
	\item{(Termination)} Repeat Step 2 and 3 until convergence or a maximum number of iterations has reached.
\end{enumerate}
Notice that unlike in our algorithm for projective $k$-clustering, the center finding stage in the above algorithm does not find the optimal scaling and center for each subset (cluster) of points. Instead we solve the anisotropic $k$-clustering problem on the cluster and then compute the optimal scaling using Fact \ref{fact:opt_alpha_aniso}. Computing the optimal scaling and center jointly, like we did for vanilla projective clustering, is an open problem we leave for future work.

\subsection{Product Quantization and Running Time}\label{sec:pq_alg}
Next, we outline how we use the algorithms from Sections \ref{sec:proj_clustering_alg} and \ref{sec:aniso_proj_clustering_alg} to compute the index for Q-PCPQ and Q-APCPQ respectively and how the inner-product of the query vector with each point is computed during query time.

As discussed in Section \ref{sec:intro}, the columns of the input data $X \in \R^{n \times d}$ are split into $m$ contiguous chunks $X^{(1)}, \dots, X^{(m)} \in \R^{n \times d/m}$ where $X^{(j)}$ contains the coordinates $[jm, \dots, (j+1)m - 1]$. We then compute the centers $C^{(1)}, \dots, C^{(m)} \in \R^{k \times d/m}$ and scalars $\{\alpha^{(j)}_i : j \in [m], i \in [n]\}$ independently for each section $j \in [m]$. The centers and scalars for each section are computed using the method in Section \ref{sec:proj_clustering_alg} for Q-PCPQ and in Section \ref{sec:aniso_proj_clustering_alg} for Q-APCPQ.  

We store $m$ lookup tables $\{\phi_{j} : [n] \to [k] \ | \ j \in [m] \}$, where look up table $\phi_j$ maps every point to its center in $C^{(j)}$.

In order to quantize the scalars, we follow the procedure from Section \ref{sec:quant_proj} for Q-PCPQ and Section \ref{sec:quant_aniso} for Q-APCPQ\footnote{Recent implementations of product quantization for Euclidean distance \cite{wu2017multiscale} and that of anisotropic $k$-clustering (ScaNN) \cite{guo2020accelerating} unit normalize all the data and store the norm $\|x_i\|_2$ as a scalar $\alpha_i$ for each $x_i$ separately. The scalars are then quantized to $s$ values, and during query time, the inner-product computations for each point are scaled by its respective quantized norm.}. Specifically, the scalars $\{\alpha^{(j)}_i \ | \ j \in[m], i \in [n]\}$ are themselves quantized to $s$ values $\{\lambda_1, \dots, \lambda_s\}$ using the aforementioned procedured and $m$ mappings $\{\gamma_j : [n] \to [s] \ | \ j \in [m]\}$ are created such that mapping $\gamma_j$ maps the scalars $\{\alpha^{(j)}_i \}_{i \in [n]}$ from section $j$ to their respective quantized scalar among $\{\lambda_1, \dots, \lambda_s\}$.


\paragraph{Inner-product computation.} For a query $q \in \R^d$, let $[q_1, \dots, q_m]$ denote the $m$ sections of the coordinates of $q$ each containing $d/m$ coordinates. The inner-products between the data $X$ and $q$ is computed in the following phases: 
\begin{enumerate}[i.]
\item for each $j \in [m]$, the $k$-length vector $\eta_j \eqdef C^{(j)}q_j$ is computed, then, 
\item for each $j \in [m]$ and $l \in [s]$, the quantity $\eta_j \cdot \lambda_l$ is computed and stored in a lookup table of size $skm$, and finally, 
\item to compute the inner-product with $x_i$, the sum \begin{equation}\label{eqn:final_ip_sum}\sum_{j = 1}^m \eta_{j, \phi_j(i)} \cdot \lambda_{\gamma_j(i)}\end{equation} is computed by performing $m$ lookups, one for each term.
\end{enumerate}

\paragraph{Running time and space complexity.} Computing the inner-products of the centers in each section with $q$ (i.e. stage i.) takes $kd$ multiply-adds, since each section has $d/m$ coordinates and there are $m$ sections with $k$ centers each. Computing the products with the $s$ scalars $\lambda_1, \dots, \lambda_s$ requires $skm$ multiplication operations, $k$ for each scalar and section. And finally computing the inner-product in step iii. requires $m$ lookups and $m$ additions per point, for a total of $nm$ lookups and additions. In total, the complexity of the search is $kd + skm + 2nm$ operations. 

Storing the mapping $\phi^{(1)}, \dots, \phi^{(m)}$ requires $m\log_2(k)$ bits per point and storing the scalar quantizing requires $\log_2(s)$ bits per point, leading to a total of $n(m\log_2(k) + \log_2(s))$ bits to store the index. In addition, $kd + s$ floating point integers need to be stored, which is a lower order term since $n \gg kd/m$.

\section{Experiments}
\begin{figure}[t!]
\centering
\begin{subfigure}{0.6\linewidth}
  \centering
  \includegraphics[width=\linewidth]{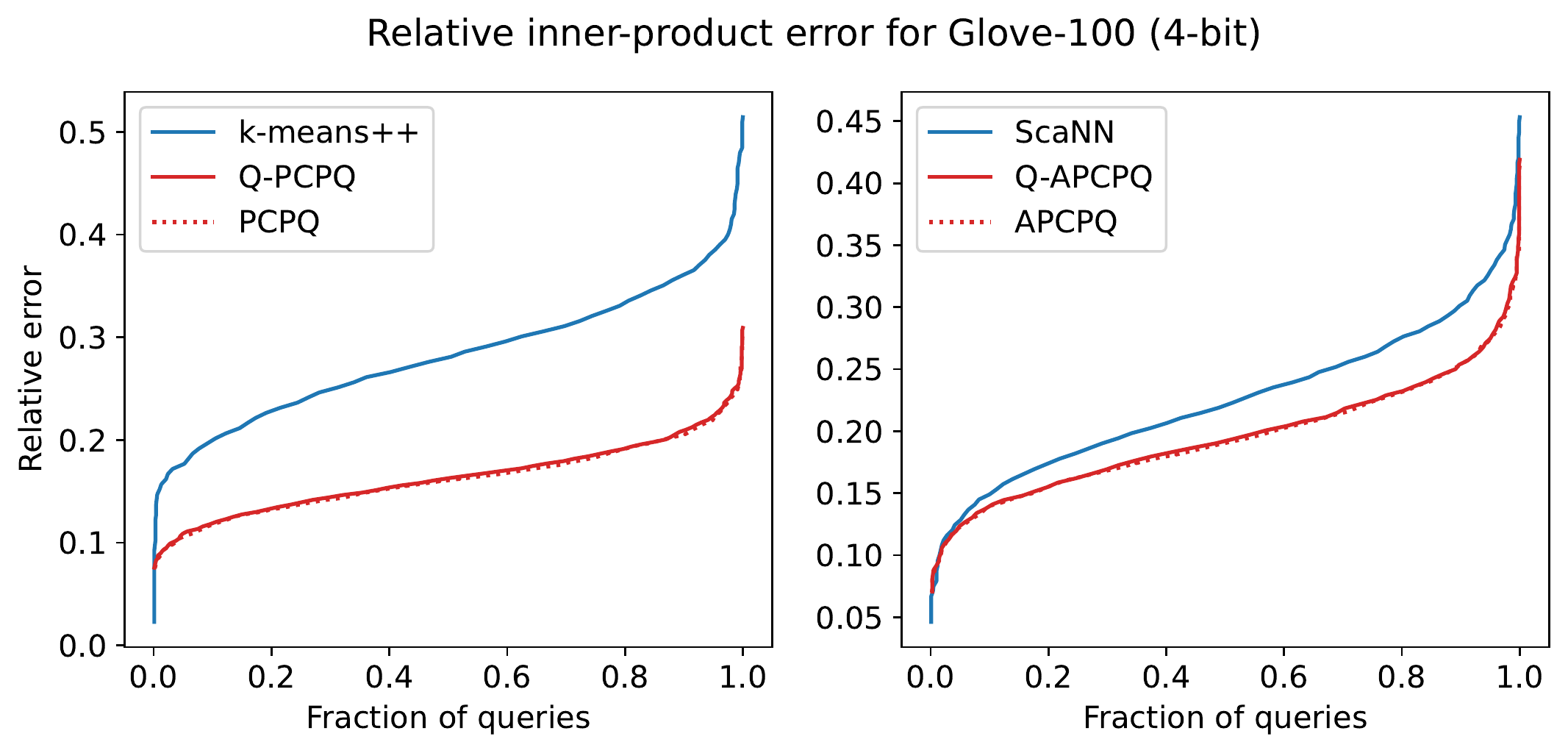}
  \caption{Relative error of each method over 1000 queries, sorted by the magnitude of the error with the query.}
  \label{fig:ip_error_glove_plot}
\end{subfigure}%
\begin{subfigure}{0.4\linewidth}
  \centering
  \begin{small}
  \begin{tabular}{|l |l |l|}
  \hline
  Method (4-bit) & Glove-100 & Last.fm \\
  \hline 
  \hline
$k$-means++    & 0.286     & 0.191   \\
Q-PCPQ         & 0.166     & 0.124   \\
PCPQ           & 0.164     & 0.114   \\
\hline
ScaNN          & 0.229     & 0.178   \\
Q-APCPQ        & 0.199     & 0.153   \\
APCPQ          & 0.197     & 0.155  \\ 
\hline
  \end{tabular}
  \end{small}
  \caption{Average relative error of inner-product over 1000 queries.}
  \label{fig:ip_error_table}
\end{subfigure}
\caption{Comparison of 4-bit $k$-means++, ScaNN, Q-PCPQ and Q-APCPQ (along with the versions without scalar quantization: PCPQ and APCPQ) on the relative error of approximating the top inner-product over 1000 queries on Glove-100 and Last.fm. }
\label{fig:ip_error_glove}
\vspace{0.5em}
\hrule 
\vspace{-0em}
\end{figure}

\begin{figure}[t]
\centering
\begin{subfigure}{0.5\linewidth}
  \centering
  \includegraphics[width=1.\linewidth]{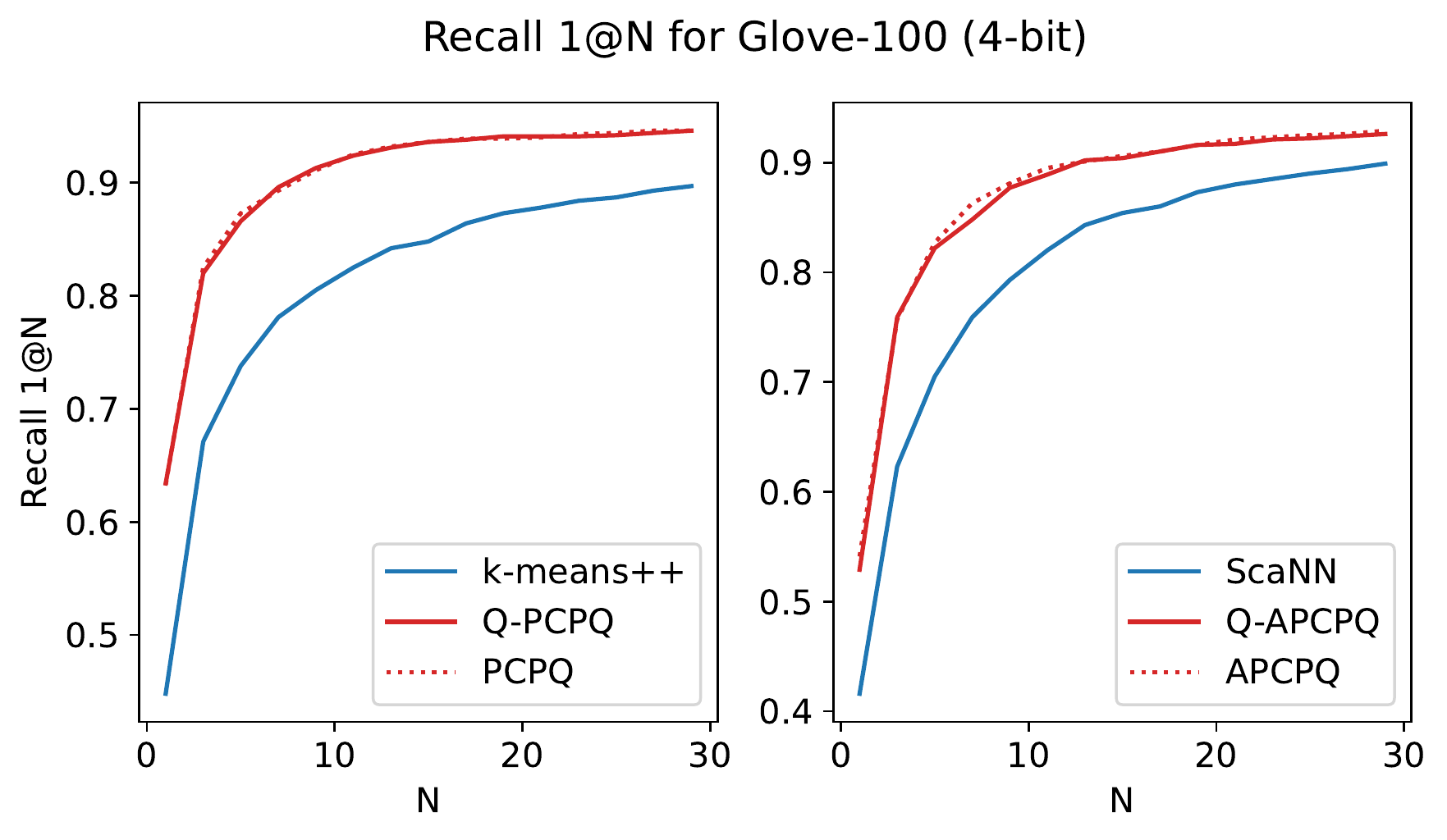}
  \label{fig:recall_glove_plot_4bit}
\end{subfigure}%
\begin{subfigure}{0.5\linewidth}
  \centering
  \includegraphics[width=1.\linewidth]{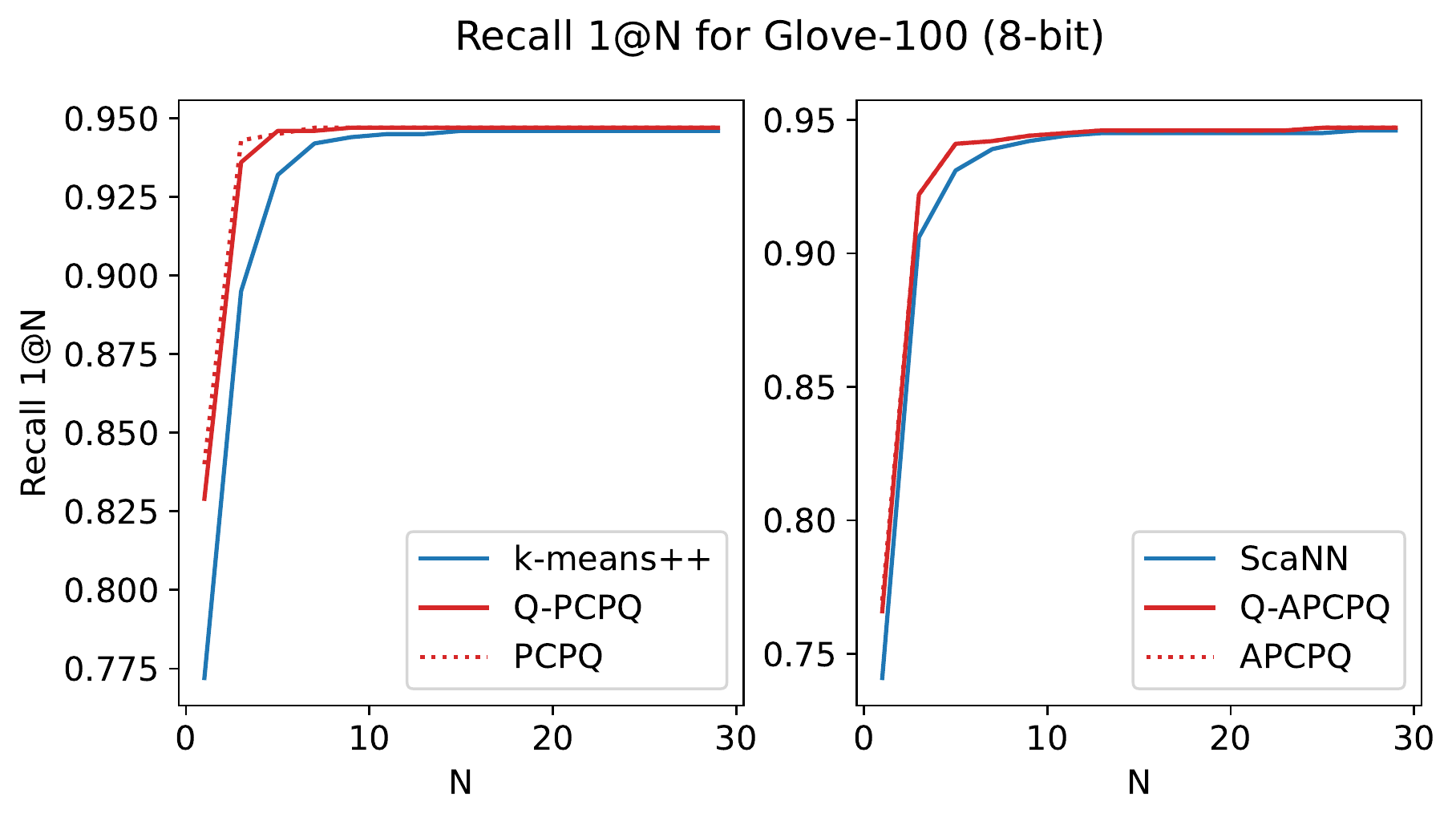}
  \label{fig:recall_glove_plot_8bit}
\end{subfigure}%
\newline
\begin{subfigure}{0.5\linewidth}
  \centering
  \includegraphics[width=1.\linewidth]{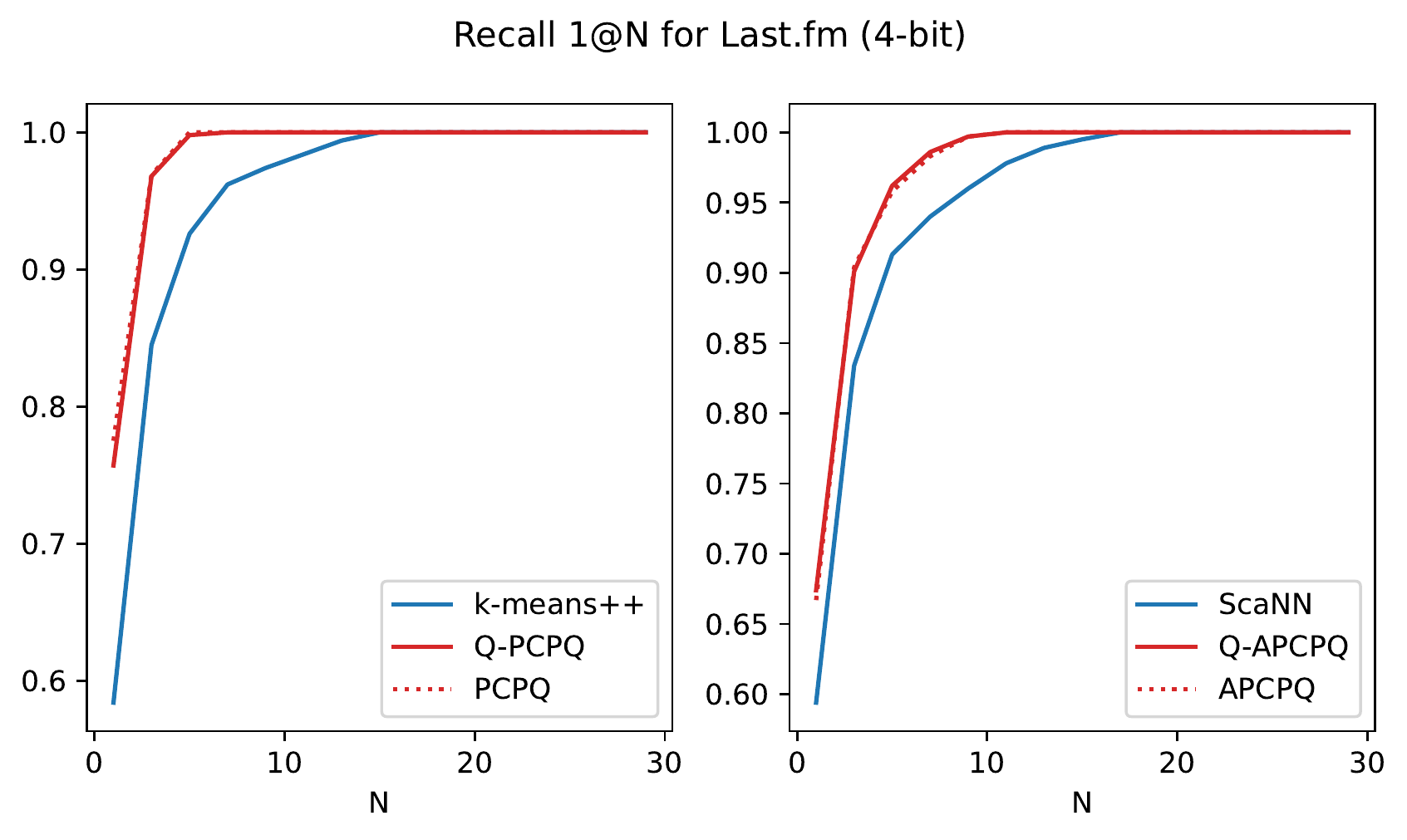}
  \label{fig:recall_lastfm_plot_4bit}
\end{subfigure}%
\begin{subfigure}{0.5\linewidth}
  \centering
  \includegraphics[width=1.\linewidth]{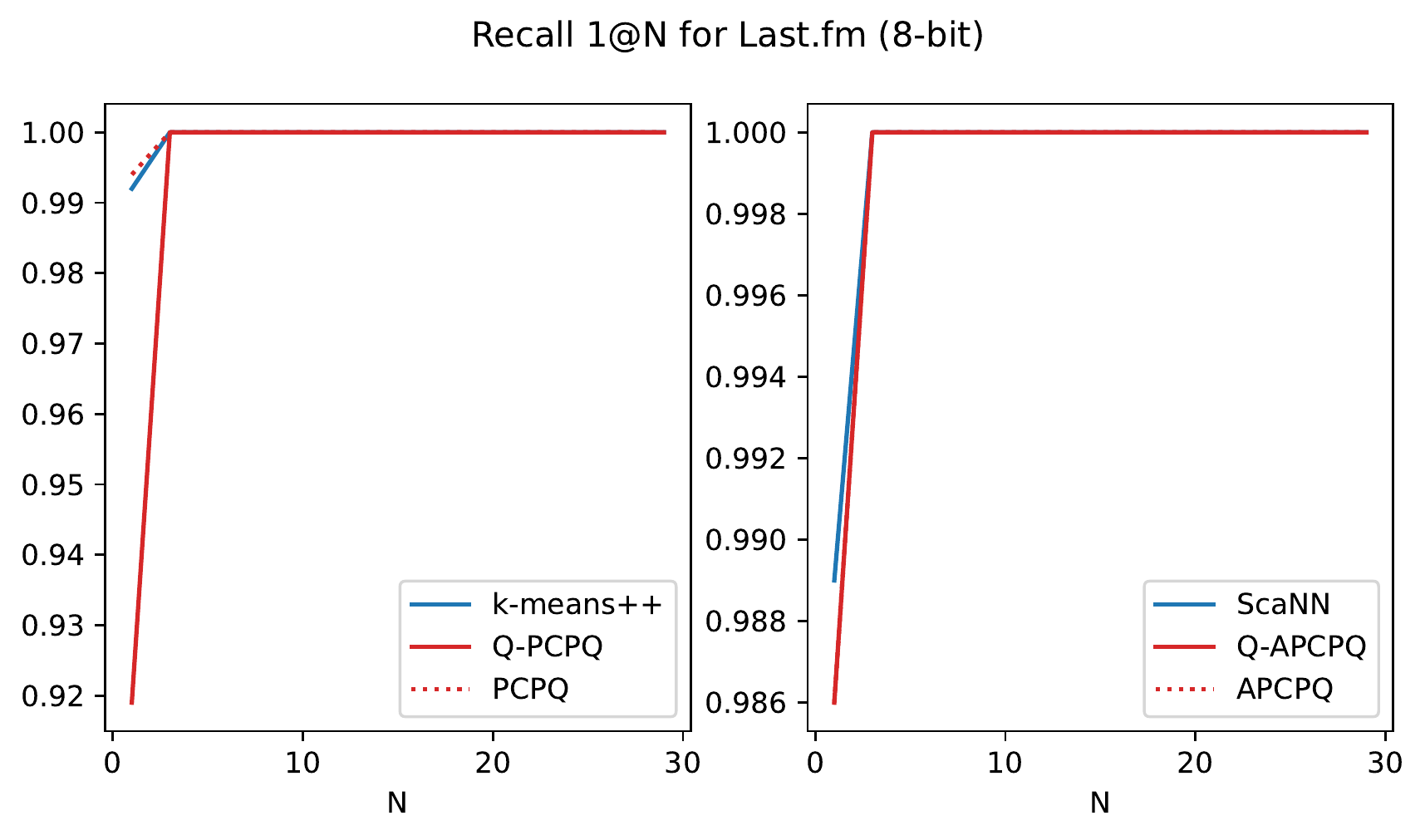}
  \label{fig:recall_lastfm_plot_8bit}
\end{subfigure}%
\vspace{-1em}
\caption{Comparison of $k$-means++ and ScaNN with Q-PCPQ and Q-APCPQ (and without projective quantization: PCPQ and APCPQ) respectively on Recall1@N on Glove-100 and Last.fm datasets in the 4-bit and 8-bit settings.}
\label{fig:recall_plots}
\vspace{0.5em}
\hrule 
\vspace{-0em}
\end{figure}
In this section we compare between $k$-clustering and its projective counterparts we proposed on their performance for maximum inner-product search. Specifically, we consider $k$-means++ and anisotropic $k$-clustering (ScaNN) \cite{guo2020accelerating} and compare them to quantized projective $k$-clustering (Q-PCPQ) and quantized anisotropic projective $k$-clustering (Q-APCPQ). In addition, we add the comaprisons to the versions of projective clustering without quantizing the scalars, which we denote by PCPQ and APCPQ respectively. In fixed-bit-rate settings we analyze the retrieval performance (i.e. recall) and the quality of approximation of maximum inner-product values to show that Q-PCPQ and Q-APCPQ achieve significantly better performance on both measures in standard benchmark datasets for MIPS. 

\paragraph{Datasets.} We use Glove-100 introduced in \citet{pennington2014glove} which is a collection of 1.2 million word embeddings in 100-dimensions. Glove-100 is used in \citet{guo2020accelerating} and is also one of the main benchmarks in \citet{aumuller2017ann} for inner-product search; \citet{aumuller2017ann} is a widely-used tool for benchmarking approximate nearest neighbor algorithms. While Glove-100 is meant for cosine similarity search, we unit-normalize all the points at training time making MIPS equivalent to cosine similarlity. Additionally we use the Last.fm dataset, another benchmark dataset from \citet{aumuller2017ann} for MIPS, containing 300,000 points in 65-dimensions created by embedding soundtracks using a matrix factorization based recommendation model for music.

\paragraph{Multiscale quantization.}
Performing PQ on the entire dataset when the dataset has more than $10^6$ data points can lead to poor performance and can be prohibitive for datasets of size $10^7$ or $10^9$. In order to circumvent this, typically in similarity search systems \cite{johnson2019billion,wu2017multiscale,guo2020accelerating} an initial clustering of the dataset is done, followed by a product quantization of the data points in each cluster. 

Usually, like in the aforementioned results, the residuals of the data points in each cluster are used for the PQ instead of the data points themselves. Specifically, for data points $X \in \R^{n \times d}$ a parameter $\bar{k} \in \N$ is chosen and $\bar{k}$ cluster centers $c_1, \dots, c_{\bar{k}} \in \R^d$ are computed. Typically $\bar{k}$ is chosen to be around $O(\sqrt{n})$ in order to balance the accucracy-performance tradeoff\footnote{see \texttt{https://github.com/facebookresearch/faiss/wiki/Guidelines-to-choose-an-index}}. Each point in $X$ is then mapped to its closest cluster center (in $\ell_2$-distance) to obtain a clustering $\{\Cc_1, \dots, \Cc_{\bar{k}}\}$ that partitions $X$. Finally, in order to build the index, the PQ method is applied to each cluster $\Cc_1, \dots, \Cc_{\bar{k}}$ separately. 

For queries, a parameter $k_{\text{probe}} \ll \bar{k}$ parameter is picked beforehand. On the arrival of a query $q \in \R^d$, points in the top $k_{\text{probe}}$ clusters are queried based on the ordering of $\iprod{q}{c_1}, \dots, \iprod{q}{c_{\bar{k}}}$ to obtain the point(s) with maximum inner-product with $q$.

\paragraph{Parameters.} For each dataset, we compute the initial clustering of $\bar{k}$ clusters using the \texttt{scikit-learn} implementation of $k$-means++ and set $\bar{k}$ so that $n/\bar{k} \approx 1000$. This provides a fair comparison for the retrieval performance and inner-product approximation of the PQ methods across datasets. For each dataset, the same clustering is used across all compared methods and parameters. 

We investigate two regimes for $k$, i.e., the number of centers in each section: $k = 16$ and $k = 256$ which are denoted the \emph{$4$-bit} and \emph{$8$-bit} settings respectively. We set the number of sections $m$ to be $m=25$ for Glove-100 and $m=16$ for Last.fm, this leads to $\bar{d} = 4$ coordinates per section. In the projected quantization setting (Q-PCPQ and Q-APCPQ), we set the scalar quantization to be $s = 8$.

For ScaNN and Q-APCPQ (and APCPQ), we set the threshold in the weight function to be $0.2 \times$ (average $\ell_2$-norm of the input vectors). An in-depth discussion of the rationale behind setting this threshold is done in \citet{guo2020accelerating}.  

\paragraph{Setup.} Our implementations for the methods are in Python v3.x and all our experiments were run on \texttt{n1-standard} Google Cloud Engine VMs with 4vCPUs @2.2GHz and 15GB RAM.

\subsection{Approximating maximum inner-product}
We compare the methods on their accuracy of estimating the top-1 inner-product. Specifically, we measure the relative error ${|\iprod{q}{x_i} - {\sum_{j = 1}^m \eta_{j, \phi_j(i)} \cdot \lambda_{\gamma(i)}}|}/{|\iprod{q}{x_i}|}$ for the point $x_i$ with the highest inner-product with $q$ over 1000 such queries. Here the summation $\sum_{j = 1}^m \eta_{j, \phi_j(i)} \cdot \lambda_{\gamma(i)}$ is the approximate inner-product for query $q$ from \eqref{eqn:final_ip_sum}. We perform a head-to-head comparison of each method in the 4-bit setting: i) $k$-means++ against Q-PCPQ (and PCPQ), and ii) ScaNN against Q-APCPQ (and APCPQ). The results over the 1000 queries are shown in Figure \ref{fig:ip_error_glove_plot} and the average relative error is given in Table \ref{fig:ip_error_table}. 

The results show that Q-PCPQ and Q-APCPQ consistently approximate the top inner-product with  the query better than $k$-means++ and ScaNN respectively. Specifically, the relative error is $12\%$ and $3\%$ lower for Q-PCPQ and Q-APCPQ respectively compared to their non-projective counterparts on Glove-100. Additionally, the results show that the loss in the approximation of the inner-product from quantizing the projections is negligible; as seen by the average relative errors between PCPQ and Q-PCPQ (and between APCPQ and Q-APCPQ) in Table \ref{fig:ip_error_table}. 

While Q-APCPQ does better on average than ScaNN, the average relative error is worse for Q-APCPQ than for Q-PCPQ. This does not align with the rationale of using the score-aware loss from \eqref{eqn:anisotropic_loss}, since the score-aware loss is meant to approximate top inner-products better than its counterpart in \eqref{eqn:pq_loss} that is not score-aware. We suspect that this might be due to the fact that we do not solve the minimization problem given in \eqref{eqn:proj-anisotropic-loss} for the anisotropic projective $k$-clustering optimally when $k=1$. 


\subsection{Recall}
\begin{table}[t]
\centering
\begin{tabular}{|l|ll|ll|ll|ll|}
\hline
                           & \multicolumn{4}{c|}{Glove-100}                                    & \multicolumn{4}{c|}{Last.fm}                                      \\
\hline
\multirow{2}{*}{Method}    & \multicolumn{2}{l|}{Recall 1@1} & \multicolumn{2}{l|}{Recall 1@10} & \multicolumn{2}{l|}{Recall 1@1} & \multicolumn{2}{l|}{Recall 1@10} \\
                           & 4-bit          & 8-bit         & 4-bit          & 8-bit          & 4-bit          & 8-bit         & 4-bit          & 8-bit          \\
\hline
\hline
$k$-means++                  & 0.448          & 0.772         & 0.825          & 0.945          & 0.584          & 0.992         & 0.984          & 1.000          \\
Q-PCPQ                     & 0.633          & 0.829         & 0.924          & 0.947          & 0.775          & 0.919         & 1.000          & 1.000          \\
PCPQ                       & 0.634          & 0.840         & 0.925          & 0.947          & 0.757          & 0.994         & 1.000          & 1.000          \\
\hline
ScaNN                      & 0.416          & 0.741         & 0.820          & 0.944          & 0.594          & 0.989         & 0.978          & 1.000          \\
Q-APCPQ                    & 0.529          & 0.766         & 0.889          & 0.945          & 0.667          & 0.986         & 1.000          & 1.000          \\
APCPQ                      & 0.541          & 0.770         & 0.895          & 0.945          & 0.674          & 0.986         & 1.000          & 1.000         \\
\hline
\end{tabular}
\caption{Recall1@1 and Recall1@10 values for $k$-means++, ScaNN, Q-PCPQ (and PCPQ) and Q-APCPQ (and APCPQ) for Glove-100 and Last.fm (averaged over 1000 queries) in the 4-bit and 8-bit regimes.}
\label{table:recall}
\vspace{0.5em}
\hrule 
\vspace{-0em}
\end{table}
One of the most important measures of the performance of a MIPS system is its retrieval performance. Specifically, for a set of queries $Q$ and a parameter $N \in \N^+$, let $\pi_N(q)$ be the top-$N$ points based on the approximate inner-products computed by method under consideration. Then, the Recall1@$N$ for the set of queries $Q$ is the following quantity: 
$$\text{Recall1@$N$} = \frac{1}{|Q|}\sum_{q \in Q}\id\left[\max_{x \in \pi_N(q)} \iprod{x}{q} \geq \max_{x \in X} \iprod{x}{q}\right].$$
We compare the performance of each method in 4-bit and 8-bit settings and measure the Recall1@1 and Recall1@10 for each dataset in Table \ref{table:recall}. Figure \ref{fig:recall_plots} plots the Recall1@N for Glove-100 and Last.fm in the 4-bit and 8-bit setting. Especially in the 4-bit setting, we see that Q-PCPQ and Q-APCPQ achieve significant recall gains over their non-projective counterparts ($k$-means++ and ScaNN); achieving approximately $8\%-10\%$ recall gains in the anistropic setting for Recall1@1 and close to $~19\%$ gain for Q-PCPQ in the same setting!

In the 8-bit regime, the number of centers per section is comparable to the number of data points, i.e., $k=256$ for 1000 points. Hence, even vanilla $k$-means++ clustering fits the data very well, achieveing as little as $3.16\%$ reconstruction error on Glove-100. In this setting we have that $k \approx n$, precluding the need for product quantization in the first place. This could explain the fact that Q-PCPQ and Q-APCPQ have much more modest gains in the 8-bit regime, and even does worse on Recall1@1 on Last.fm in the 8-bit regime. On much larger datasets however, where the number of centers $k$ per section is much smaller than the number of points being indexed, we expect Q-PCPQ to do much better than $k$-means, as is consistently shown in the 4-bit regime, where $k$-means++ achieves $30.09\%$ reconstruction error on Glove-100.

\bibliography{refs}

\appendix
\section{Projective Clustering}
\begin{lemma}\label{appx:proj_clustering_reduction} When $\Qc$ is isotropic over $\R^{\bar{d}}$, we have that \begin{align*}
\min_{c_1, \dots c_k \in \R^{\bar{d}}} \sum_{i=1}^n \min_{\alpha \in \R, j \in [k]} \ExpSub{q \sim \Qc}{\iprod{q}{ x - \alpha \cdot c_j}^2} = \min_{c_1, \dots, c_k \in \R^{\bar{d}}} \sum_{i = 1}^n \min_{j \in [k]} \left\|x_i - \frac{\iprod{x_i}{c_j}}{\|c_j\|_2^2} \cdot c_j \right \|_2^2.
\end{align*} 
\end{lemma}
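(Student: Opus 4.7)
The plan is to chain three straightforward simplifications: apply isotropy to rewrite the expectation as a squared Euclidean norm, swap the order of the joint minimization over $(\alpha, j)$, and then solve the inner one-dimensional minimization over $\alpha$ in closed form.

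First, I would unpack isotropy. Writing $\langle q, v\rangle^2 = v^\top qq^\top v$, linearity of expectation gives $\Ex_{q \sim \Qc}[\langle q, v\rangle^2] = v^\top \Ex_{q \sim \Qc}[qq^\top] v$. Because $\Qc$ is isotropic, $\Ex[qq^\top] = I$ (or a positive multiple of $I$; any positive scalar factor is constant across all choices of $c_1, \dots, c_k$ and $\alpha$ and therefore does not affect the identity of the minimizers, which is how I will read the equality). Applying this with $v = x_i - \alpha c_j$ turns the LHS into
\begin{equation*}
\min_{c_1,\dots,c_k} \sum_{i=1}^n \min_{\alpha \in \R,\, j \in [k]} \|x_i - \alpha c_j\|_2^2.
\end{equation*}

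Next, I would swap the joint minimization: for any function $F(\alpha, j)$ we have $\min_{\alpha, j} F(\alpha, j) = \min_{j} \min_{\alpha} F(\alpha, j)$, so the expression becomes
\begin{equation*}
\min_{c_1,\dots,c_k} \sum_{i=1}^n \min_{j \in [k]} \min_{\alpha \in \R} \|x_i - \alpha c_j\|_2^2.
\end{equation*}
For fixed $x_i$ and $c_j$, the inner minimization is a univariate quadratic in $\alpha$: expanding gives $\|x_i\|_2^2 - 2\alpha \langle x_i, c_j\rangle + \alpha^2 \|c_j\|_2^2$. Assuming $c_j \neq 0$ (and handling $c_j = 0$ as a degenerate case by noting the corresponding $\alpha c_j = 0$ contributes $\|x_i\|_2^2$, which can only be attained by the other case as well by taking $\alpha = 0$), setting the derivative to zero yields $\alpha^* = \langle x_i, c_j\rangle / \|c_j\|_2^2$. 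Substituting $\alpha^*$ back produces exactly $\|x_i - (\langle x_i, c_j\rangle/\|c_j\|_2^2)\, c_j\|_2^2$, which is the orthogonal-projection residual appearing on the RHS.

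There is no real obstacle here; the only subtle point is the convention for ``isotropic.'' If one takes $\Ex[qq^\top] = \kappa I$ for arbitrary $\kappa > 0$, the two sides differ by the scalar $\kappa$, but the $\argmin$ over $(c_1,\dots,c_k)$ and the structural form of the minimizers are identical, and when the stated interpretation is $\kappa = 1$ the equality holds on the nose. Putting these three steps together gives the claim.
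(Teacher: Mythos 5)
Your proof is correct and follows essentially the same route as the paper's: use isotropy to replace $\Ex_{q}[\iprod{q}{x_i - \alpha c_j}^2]$ with $\|x_i - \alpha c_j\|_2^2$, then observe that the minimizing $\alpha$ for fixed $c_j$ is $\iprod{x_i}{c_j}/\|c_j\|_2^2$, i.e.\ the orthogonal projection of $x_i$ onto the span of $c_j$. The only difference is that you spell out the $c_j = 0$ degenerate case and the normalization convention for ``isotropic,'' both of which the paper leaves implicit.
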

\begin{proof}
\begin{align*}
\min_{c_1, \dots c_k  } \sum_{i=1}^n \min_{\alpha, j} \ExpSub{q \sim \Qc}{\iprod{q}{ x - \alpha \cdot c_j}^2} &= \min_{\{c_j\}_{j\in [k]}} \sum_{i = 1}^n \min_{\alpha, j} (x - \alpha c_j)^\top \Exp{qq^\top}(x - \alpha c_j) \\ 
&= \min_{\{c_j\}_{j\in [k]}} \sum_{i = 1}^n \min_{\alpha, j} \|(x - \alpha c_j)\|_2^2 \\ 
&= \min_{\{c_j\}_{j\in [k]}} \sum_{i = 1}^n \min_{j} \left\|(x - \frac{\iprod{x}{c_j}}{\|c_j\|_2^2} c_j)\right\|_2^2.
\end{align*}
The second-to-last equality follows from the isotropy of $\Qc$ and the last equality follows from the fact that $\frac{\iprod{x}{c_j}}{c_j}c_j$ is the projection of $x$ onto $c_j$.
\end{proof}

\section{Quantized Projective $k$-Clustering}
\subsection{Proof of Fact \ref{fact:qpcpq_loss_reduction}}\label{appx:qpcpq}
\begin{align*}
\min_{\{\bar{u}_j\}_{j \in [k]} } \sum_{j = 1}^k \left\|X_j - \bar{u}_{j} v_j^\top \right \|_F^2 &= \min \sum_{j = 1}^k \left\|(X_j - \sigma_j {u}_{j} v_j^\top) + (\sigma_j {u}_{j} v_j^\top - \bar{u}_{j} v_j^\top)  \right \|_F^2  \\ 
&\leq \min \sum_{j = 1}^k \|(X_j - \sigma_j {u}_{j} v_j^\top)\|_F^2 + \|(\sigma_j {u}_{j} v_j^\top - \bar{u}_{j} v_j^\top)  \|_F^2 \\ 
&= OPT + \min \sum_{j = 1}^k \|(\sigma_j {u}_{j} v_j^\top - \bar{u}_{j} v_j^\top) \|_F^2 \\ 
&= OPT + \min \sum_{j = 1}^k \|((\sigma_ju_j - \bar{u}_j) v_j^\top  \|_F^2 = OPT + \min \sum_{j = 1}^k \|\sigma_ju_j - \bar{u}_j\|_2^2 
\end{align*}
where the last line follows from the fact that $v_j^\top$ is a one-dimensional rotation matrix.

\section{Anisotropic Loss}\label{appx:hvalues_discussion}
\begin{wrapfigure}{r}{0.41\linewidth}
	\vspace{-1em}
	\centering
	\includegraphics[width=.35\textwidth]{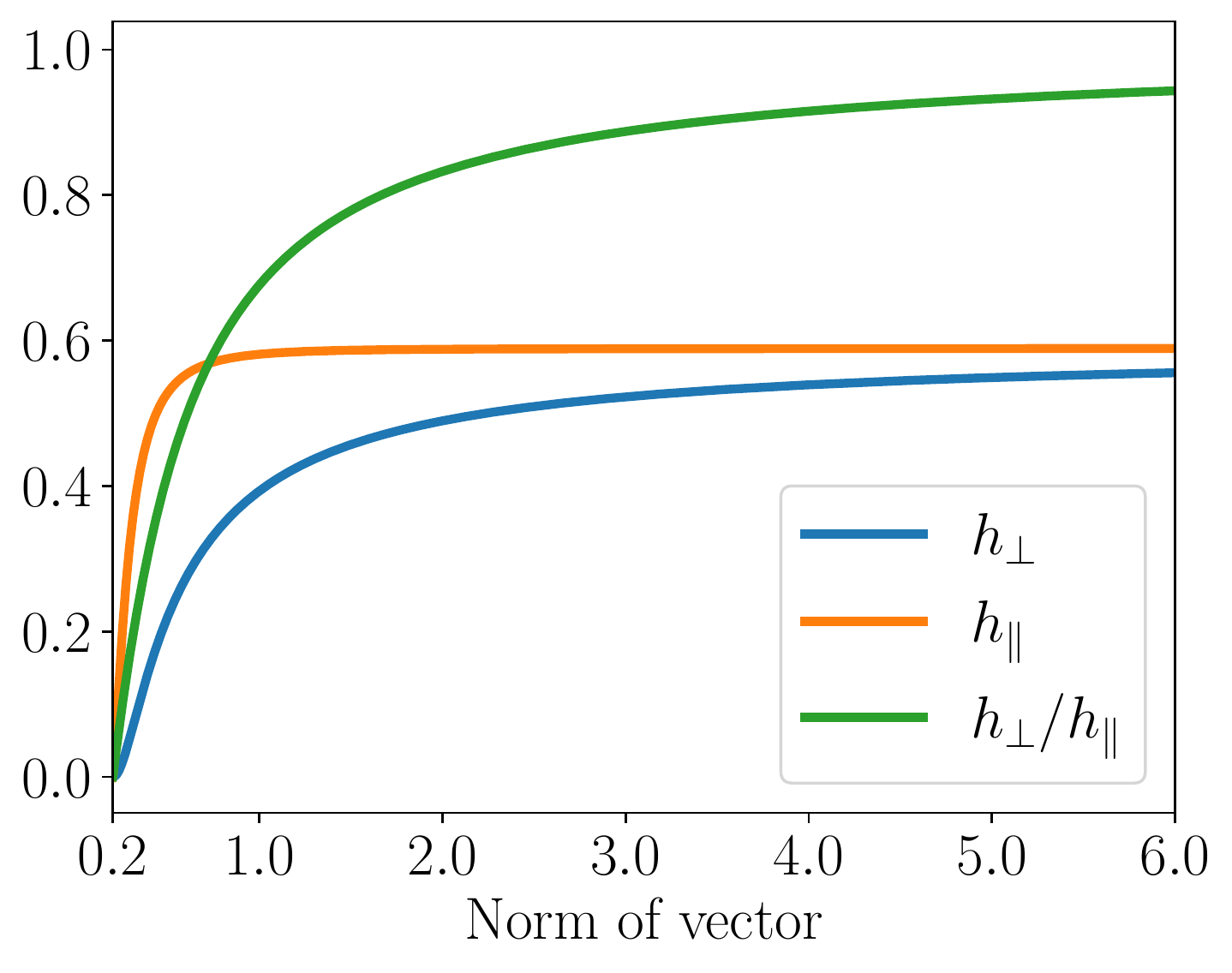}
	\vspace{-1em}
	\caption{$h_\parallel, h_\bot$ and $h_\bot/h_\parallel$ for $t = 0.2$ and $d = 4$.}
	\label{fig:hvalues}
	\vspace{0.5em}
	\hrule
	\vspace{-1.5em}
\end{wrapfigure}
When all $x \in X$ have identical $\ell_2$-norm, it is sufficient to consider the ratio $h_\bot/h_\parallel$. We plot the value of $h_\bot/h_\parallel$ for $t = 0.2$ in Figure \ref{fig:hvalues} to illustrate how the loss function changes in terms of the norm of the input vector $x$. 

As illustrated in the figure, as $\|x\|_2$ approaches $t$ (from above) the ratio $h_\bot/h_\parallel$ tends to $0$ indicating that only the parallel component of the residual matters, i.e. just the projection of the center onto $x$ is penalized. Although, when $\|x\|_2 \gg t$, the ratio $h_\bot/h_\parallel$ approaches $1$ resulting in weighting the parallel and residual components equally. This reduces the loss function to the loss function for $k$-means clustering: $\min_c \sum_{i = 1}^n h(\|r_\parallel(x_i, c)\|_2^2 + \|r_\bot(x_i, c)\|_2^2) = \min_c \sum_{i = 1}^n \|x_i - c\|_2^2.$


\end{document}